\documentclass[runningheads]{llncs}

\newcommand{\reachSymbol}{\ensuremath{\mathcal{R}}}
\newcommand{\reach}[2]{\ensuremath{\reachSymbol_{[#1, #2]}}\xspace}
\newcommand{\oreach}[2]{\ensuremath{\widehat{\reachSymbol}_{[#1, #2]}}\xspace}
\newcommand{\X}{\ensuremath{\mathcal{X}}\xspace}
\newcommand{\U}{\ensuremath{\mathcal{U}}\xspace}
\newcommand{\safe}{\ensuremath{\X_s}\xspace}
\newcommand{\R}{\ensuremath{\mathbb{R}}\xspace}
\newcommand{\CH}{\ensuremath{\overline{\mathit{CH}}}\xspace}
\newcommand{\CHexact}{\ensuremath{\mathit{CH}}\xspace}
\newcommand{\policy}{\ensuremath{\pi}\xspace}

\newcommand{\TheTitle}{Algorithm~\ref{alg:ReachTube}\xspace}

\newcommand{\figref}[1]{Fig.~\ref{#1}\xspace}

\usepackage[utf8]{inputenc}
\usepackage[main=english]{babel}
\usepackage[T1]{fontenc}
\usepackage[babel,threshold=2]{csquotes}

\usepackage[labelformat=simple]{subcaption}

\usepackage{amssymb}
\usepackage{amsmath}

\PassOptionsToPackage{hyphens}{url} 

\usepackage{hyperref}
\usepackage{color}

\usepackage{url}

\usepackage{algorithm}
\usepackage{algpseudocode}

\usepackage{graphicx}

\usepackage[dvipsnames,table,xcdraw]{xcolor}

\usepackage{setspace}

\usepackage{longtable}
\usepackage{booktabs} 
\usepackage{siunitx}  
\usepackage{makecell} 

\usepackage{xspace}

\RequirePackage{graphicx}
\usepackage[firstpageonly=true,angle=0,vpos=.147\paperheight,hpos=.39\linewidth,vanchor=t,hanchor=l]{draftwatermark}
\SetWatermarkText{%
{\begin{minipage}{\linewidth}\centering\includegraphics[width=12.5mm]
{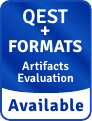}\hfill\includegraphics[width=12.5mm]
{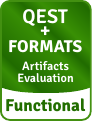}\end{minipage}}%
}

\begin{document}

\title{Scalable Reachability Analysis of \\ Linear Continuous Systems with \\ Property-Driven Time-Step Adaptation}
\titlerunning{Reachability Analysis with Property-Driven Time-Step Adaptation}
%
\author{Mikkel~Bjørn\inst{1}\orcidID{0009-0004-7888-9070} \and
Daniel~H.~Hansen\inst{1}\orcidID{0009-0006-3085-7293} \and
Grace~Melchiors\inst{1} \and
Kim~G.~Larsen\inst{1}\orcidID{0000-0002-5953-3384} \and
Christian~Schilling\inst{1}\orcidID{0000-0003-3658-1065}}
\authorrunning{M. Bjørn \and D. H. Hansen \and G. Melchiors \and K. G. Larsen \and C. Schilling}
%
\institute{Aalborg University, Aalborg, Denmark}

\maketitle

\begin{abstract}
We study safety verification for linear time-invariant systems with bounded inputs in continuous time. The standard approach reduces to a reachability analysis in two steps: first discretize time and then apply a forward analysis in the discretized system. Existing algorithms use either a fixed time step or an adaptive time step that changes based on the approximation error compared to the underlying continuous system. In this paper, we present an efficient reachability algorithm that adapts the time step based on a given safety property. Essentially, our algorithm makes the largest possible time step such that it can still prove safety. For this approach to be scalable in practice, we discuss several optimizations such as avoiding the repeated expensive calculation of the matrix exponential during discretization and a careful balance how we tame the approximation error stemming from the states and the inputs. This allows our algorithm to yield a moderate approximation error even when using a large time step, thus requiring much fewer steps than prior algorithms. We demonstrate the effectiveness and scalability on the large-scale SLICOT benchmark suite, where our algorithm consistently outperforms other state-of-the-art approaches.

\keywords{Reachability Analysis \and LTI System \and Safety Verification \and Adaptive Time Step \and Discretization.}
\end{abstract}

\section{Introduction}
Verification of safety properties for linear time-invariant (LTI) systems is a well-studied problem~\cite{BasicNotation,LGG09,Combiningzonotopesandsupportfunctions,DecomposingReachSetComputations,Undecidable:1,Luo2023,Wetzlinger2026,ReachabilityOverview}.\
In this paper, we focus on the common setting with continuous-time dynamics described by ordinary differential equations (ODEs), where the initial states and inputs are uncertain within a bounded domain, and where time is bounded.
Formally, our systems have the form
\begin{equation}\label{eq:LTI}
    \dot{x}(t) = Ax(t)+u(t), \quad t\in \left[0, T\right],
\end{equation}
where the initial state~$x(0) \in \X_0$ is constrained to a set of initial states, $u(t)\in \U$ is the input signal constrained to an input domain, and $T>0$ is the time horizon.

Given a safety property in the form of a set of safe states~$\safe$, the safety verification problem is to determine whether all trajectories emerging from~$\X_0$ with input signal~$u(t) \in \U$ stay safe within time~$T$.
The problem thus reduces to computing the set of reachable states~$\reach{0}{T}(\X_0)$ up to time horizon $T$ and checking that~$\reach{0}{T}(\X_0)$ is included in~$\safe$.
The set~$\reach{0}{T}(\X_0)$ is generally not computable except for simple subclasses~\cite{LafferrierePY01,Undecidable:1,ChenG24}.
Hence, one typically computes an overapproximation~$\oreach{0}{T}(\X_0) \supseteq \reach{0}{T}(\X_0)$; this is sufficient, as $\oreach{0}{T}(\X_0) \subseteq \safe$ implies $\reach{0}{T}(\X_0) \subseteq \safe$.
The standard approach to obtaining~$\oreach{0}{T}(\X_0)$ is to (1)~discretize the continuous system and (2)~compute a series of discrete steps in the resulting discrete system~\cite{ForetsS22}.
The approximation quality can be increased by reducing the time step of the discretization, which, however, then necessitates a larger number of steps to be taken.

While most approaches use a fixed time step and thus require a single discretization operation, some more advanced approaches instead use an adaptive time step~\cite{PrabhakarV11,FrehseKG13,KOCHDUMPER2024101491,ErrorBoundBased,SpaceEx:errorsupportFunc}.
The latter approaches have in common that they analyze the approximation error (based on error estimates) dynamically and choose a time step such that the error is below a user-defined threshold at all times.

\smallskip

\begin{figure}[t]
    \centering
    \includegraphics[width=\textwidth,height=65mm,keepaspectratio]{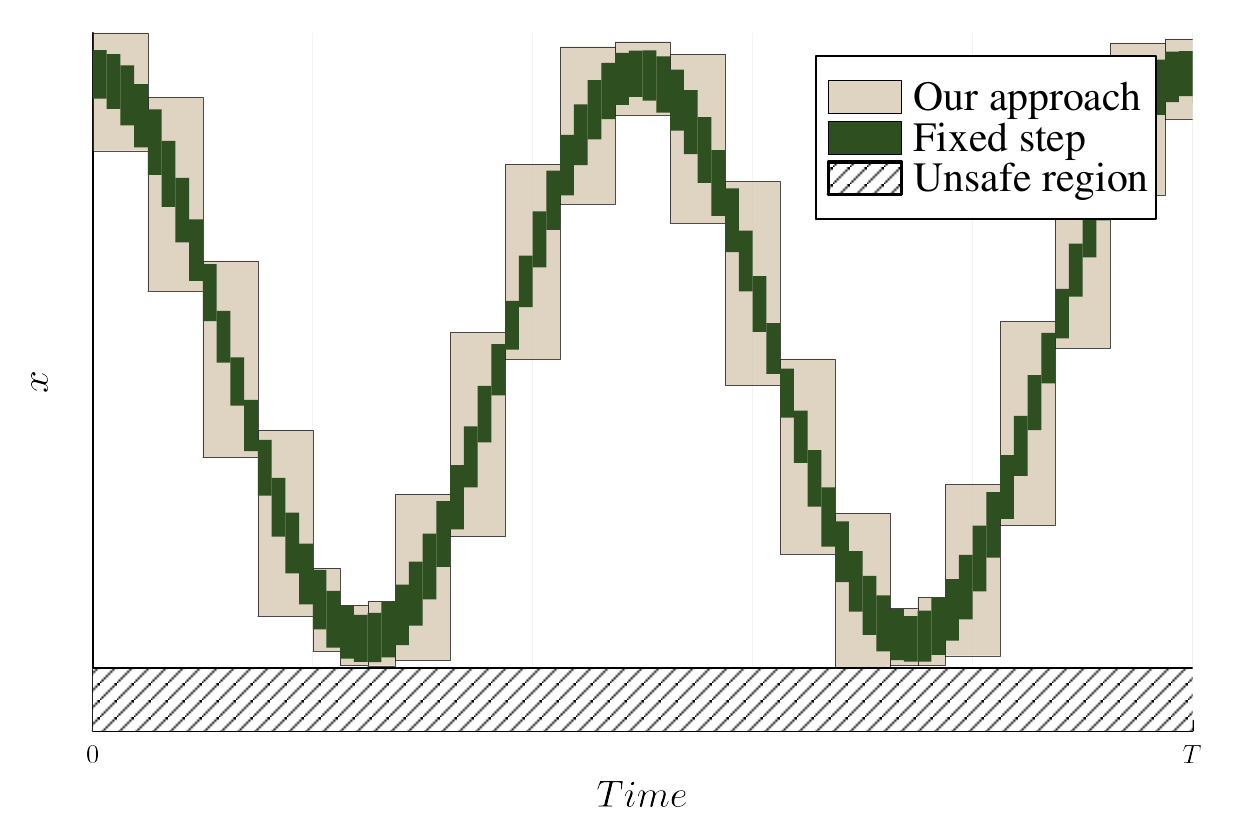}
    \caption{Two overapproximations~$\oreach{0}{T}(\X_0)$ of the reachable states.
    Each block corresponds to one step in time.
    The green sequence was obtained by a fixed-step approach with a step size chosen such that safety (staying above the bottom unsafe region) is verified.
    The brown sequence was obtained by our approach, which chooses the time step depending on the distance to the unsafe region.}
    \label{fig:motivation}
\end{figure}

In this paper, we propose a new approach that uses neither a fixed time step nor a fixed approximation error.
As a motivation, we now discuss the two main drawbacks of these two existing paradigms.
First, both paradigms scale poorly when high precision is required \emph{at some point} during the analysis.
This is clear for approaches using a fixed time step, as the time step must be chosen small enough for such critical points, even if a larger step would be sufficient most of the time.
Approaches using a fixed error can sometimes make larger steps, but they are still held back by the prescribed error bound, and, just like above, this error bound must be small enough; hence, the variance in step sizes is often rather modest in practice.
This drawback is illustrated in \figref{fig:motivation}, where the adaptive algorithm would have to use a similar error as the fixed-step algorithm, since the approximation~$\oreach{0}{T}(\X_0)$ gets very close to the error states.

The second drawback is that the user is expected to anticipate the required precision up-front.
Either a small enough step size or an allowed approximation error must be chosen, but it is initially unknown how closely the reachable states (and hence the approximation~$\oreach{0}{T}(\X_0)$) will approach the boundaries of~$\safe$.

Our approach does not have these two drawbacks, i.e., it can make large time steps for better scalability and does not require an error bound.
Specifically, we assume that the user only wants to verify safety as fast as possible and does not care about the approximation error.
At a high level, our approach is optimistic and tries to make large time steps.
If at some point the precision is insufficient to prove $\oreach{0}{T}(\X_0) \subseteq \safe$, we backtrack and reduce the step size, which improves the precision.
Later, we attempt to increase the step size again to improve scalability.

While this \emph{property-guided} idea is simple, a naive implementation would not scale, especially for high-dimensional systems.
In particular, it is costly to repeatedly compute a discretization whenever the time step changes, among other reasons due to the expensive computation of the matrix exponential.
Our main contribution is an efficient implementation of this idea.
A key insight is to precompute a library of conservative time-discretized systems~\cite{ForetsS22} for different step sizes, which allows us to change the step size fast.
Moreover, for efficient calculation of this library, we use specific step sizes (powers of two; in particular, this choice avoids the repeated computation of the matrix exponential).
Furthermore, to deal with the uncertain inputs~$u(t)$, we propose a hybrid approach that combines small-step and large-step analysis.
Finally, we utilize a mixed set representation when calculating~$\oreach{0}{T}(\X_0)$; specifically, we combine zonotopes and the support function~\cite{Combiningzonotopesandsupportfunctions}, where we use zonotopes for the discretization and the support function for taking the discrete steps.

As the main limitation of our approach, the precision is selected based on the property; hence, the results are not reusable if the property changes.

\paragraph{Contributions.}

First, we present a new algorithm for property-guided reachability analysis of LTI systems~\eqref{eq:LTI}, which significantly reduces the computational verification effort.
Second, we show how to implement conservative time discretization for variable time steps in a scalable way by reusing intermediate results to obtain a more precise sets for larger step sizes while only calculating the matrix exponential once.
Third, we implement our approach efficiently and demonstrate its scalability on a set of high-dimensional benchmark models in comparison to state-of-the-art reachability algorithms.
Our source code and benchmark scripts are publicly available~\cite{ReACT:Github}.

\subsection{Related Work}

\paragraph{Fixed time step.}
The standard way to reachability analysis for LTI dynamics discretizes system~\eqref{eq:LTI} for a small time step~$\delta > 0$~\cite{ForetsS22}.
We shortly sketch the general procedure for the simplified case without inputs~$u(t)$ here (and we recall one concrete way in the next section).
First, one computes an overapproximation~$\oreach{0}{\delta}(\X_0)$, for which a coarse approximation suffices because~$\delta$, and hence the approximation error, is small.
One also computes the matrix~$e^{A \delta}$ (the matrix exponential of~$A \delta$); the motivation is that the linear map~$e^{A \delta} \X_0$ describes the set of reachable states~$\reach{\delta}{\delta}(\X_0)$ at time point~$\delta$.
Finally, one propagates the set~$\Omega_0 = \oreach{0}{\delta}(\X_0)$ forward with the recurrence~$\Omega_{k+1} = e^{A \delta} \Omega_k$ until $k \delta \ge T$.
The sets~$\Omega_k$ are the blocks depicted in \figref{fig:motivation}.

There are many algorithms implementing the above procedure in different ways, for instance varying in the choice of the representation or the way to obtain~$\oreach{0}{\delta}(\X_0)$~\cite{BasicZonotope,BasicNotation,SupportFunctions,SpaceEx:errorsupportFunc,Combiningzonotopesandsupportfunctions,DecomposingReachSetComputations,Luo2023}.
All these approaches have in common that they use a fixed time step~$\delta$.
As a consequence, the user needs to choose~$\delta$ small enough for achieving sufficient precision to prove the safety property; hence, $k$ becomes large in practice, even if most of the sets~$\Omega_k$ are far away from the boundaries of the safe region and would consequently not require high precision.

\paragraph{Adaptive time step.}
Existing reachability algorithms with variable time steps have the goal to achieve a prescribed approximation error~\cite{PrabhakarV11,FrehseKG13,KOCHDUMPER2024101491,ErrorBoundBased,SpaceEx:errorsupportFunc}.
For that, these algorithms estimate the error and adapt the time step accordingly.
Hence, they typically shine if a good approximation is desired throughout the computation.
However, this error-bounded paradigm is decoupled from the verification problem: one first approximates the full reachable states and only checks the safety property afterward.
Thus, if one is interested in solving the verification problem, one has to guess which precision will be needed.
In contrast, we propose a property-driven algorithm, for which the approximation error does not matter as long as it is small enough to prove the property.
We automatically adapt to the error such that the safety property can be proven.
While our algorithm sometimes has to backtrack if the precision is insufficient to prove safety, this backtracking is always by a single  step only, and we only have to approximate the reachable states once in this way, while error-bounded adaptive algorithms would have to redo the whole work if the error parameter was chosen too large.

\paragraph{Time discretization.}
Since the seminal work by Girard~\cite{BasicZonotope}, efficient reachability algorithms use the paradigm of time discretization followed by set propagation, albeit different approximation models have been proposed for the discretization~\cite{ForetsS22}.
Notable approximation models are based on interval matrices~\cite{AlthoffSB07} or an optimization problem over a forward and a backward approximation~\cite{SpaceEx:errorsupportFunc}.
In this paper, we utilize a simplified (forward-only) version of the latter, which avoids the optimization step~\cite{DecomposingReachSetComputations} (see Section~\ref{sec:forward_discretization}).
One of our central contributions is a new approach to compute a time discretization for large time steps with good precision and scalability, which has not been demonstrated before.

\section{Preliminaries}

In this section, we recall the essential steps to derive classical reachability algorithms for LTI systems.
We first explain, assuming some discretization method, how to compute a sequence of sets that contain the reachable states.
Then, we recall one particular such discretization method.
Finally, we discuss two efficient ways to represent and operate with sets: zonotopes and the support function.

\subsection{Approximating the Reachable States}

Recall that we consider the continuous-time LTI dynamics in~\eqref{eq:LTI}, which are characterized by the triple~$(A, \X_0, \U)$, with~$A \in \R^{n \times n}$ and $\X_0, \U \subseteq \R^n$:
\begin{equation}\tag{\ref{eq:LTI}}
    \dot{x}(t) = Ax(t)+u(t), \quad t\in \left[0, T\right]
\end{equation}

Note that this includes systems of the form~$\dot{x}(t) = Ax(t)+Bu(t)+c$ by a standard transformation of the input domain~$\U' \gets B \U \oplus \{c\}$, where~$\oplus$ denotes the Minkowski sum.
The unique solution of ODE system~\eqref{eq:LTI} at time point~$t$ is
\begin{equation}\label{eq:LTIsol}
    x(t)= e^{At}x(0) + \int_0^t e^{A(t-s)}u(s)ds,
\end{equation}
for initial state~$x(0)$ and input signal~$u(t) \in \U$.
We assume that both $\X_0$ and $\U$ are compact and convex sets.
Let $\reach{t_1}{t_2}(\X_0)$ denote the reachable states in a time interval $[t_1, t_2]$ starting from the initial set~$\X_0$; formally:
\begin{equation*}
    \reach{t_1}{t_2}(\X_0) = \{x(\tau) \mid t_1 \le \tau \le t_2, x(0) \in \X_0, u(t) \in \U\}
\end{equation*}
(Our notation omits the implicit dependence of~$\reachSymbol$ on~$\U$ for simplicity.)
The exact set of reachable states for a time interval is often not computable, and one typically overapproximates it with a sequence of sets~$\Omega_0,\dots,\Omega_{N-1}$ called a \emph{reach tube} such that $\reach{t_{k}}{t_{k+1}}(\X_0) \subseteq \Omega_k$ up to time~$T$, where~$t_0 = 0$ and~$t_N = T$.
Thus, from now on we consider the overapproximation~$\oreach{0}{T}(\X_0) = \Omega_0 \cup \dots \cup \Omega_{N-1}$.

Given a set of safe states~$\safe \subseteq \R^n$, the \emph{safety verification problem} is to determine whether~$\reach{0}{T} \subseteq \safe$ holds.
A sufficient condition is~$\oreach{0}{T}(\X_0) \subseteq \safe$, for any overapproximation; however, a failed inclusion check is inconclusive.

As the system is linear, we can utilize the superposition principle of the solution in~\eqref{eq:LTIsol} to compute the \emph{input solution} (derived by letting $x(0)=0$) separately.
For a time step~$\delta > 0$, we have that~\cite[Lemma 1]{SpaceEx:errorsupportFunc}
\begin{equation}\label{eq:separation}
    \reach{t}{t+\delta}(\X_0) = e^{At} \reach{0}{\delta}(\X_0) \oplus \reach{t}{t}(\{0\}).
\end{equation}
Thus, for any time point~$t \ge 0$, to overapproximate the reachable states in the time interval~$[t, t + \delta]$, it suffices to overapproximate the two summands,
which cover the reachable states in the interval~$[0, \delta]$ resp.\ the input solution at time~$t$.
Following notation from~\cite{SpaceEx:errorsupportFunc}, we introduce shorthands for these approximations:
\begin{equation}\label{eq:overapproximation}
\begin{aligned}
    \Omega_{[0, \delta]}(\X_0, \U) &\supseteq \reach{0}{\delta}(\X_0),
    \\[2mm]
    \Psi_{t}(\U) &\supseteq \reach{t}{t}(\{0\}).
\end{aligned}
\end{equation}

Given a time step~$\delta$, we call a triple~$(e^{A \delta}, \Omega_{[0,\delta]}(\X_0, \U), \Psi_\delta(\U))$ satisfying~\eqref{eq:overapproximation} a \emph{discretized system}.
Letting $\Psi_0 = \{0 \}$ and $\delta_k = t_{k+1} - t_k$, Eq.~\eqref{eq:separation} yields a recurrence to compute the reach tube $\Omega_0,\dots,\Omega_{N-1}$~\cite[Eq.~(6)]{SpaceEx:errorsupportFunc}:
\begin{equation}\label{eq:ApproxModel}
  \begin{aligned}
    \Psi_{k+1} &= \Psi_k \oplus e^{At_k}\Psi_{\delta_k}(\U), \\
    \Omega_k &= e^{At_k}\Omega_{[0, \delta_k]}(\X_0, \U) \oplus \Psi_k
  \end{aligned}
\end{equation}

Note that~\eqref{eq:ApproxModel} does not require to use a constant time step~$\delta$.
But when mixing different time steps, one also needs to compute multiple discretized systems.
It can be shown that the reach tube obtained from~\eqref{eq:ApproxModel} for time steps~$\delta_k$ with~$N$ chosen such that~$\sum_{k=0}^{N-1} \delta_k \ge T$ satisfies~\cite[Proposition~1]{SpaceEx:errorsupportFunc}
\begin{equation*}
    \reach{0}{T}(\X_0) \subseteq \bigcup_{k=0}^{N-1} \Omega_k.
\end{equation*}

\subsection{Forward Discretization Method}\label{sec:forward_discretization}

We use the \emph{forward} approximation model~\cite{DecomposingReachSetComputations} (a simplification of the model in~\cite{SpaceEx:errorsupportFunc} which also goes backward in time) for instantiating $\Omega_{[0,\delta]}$ and $\Psi_t$ in~\eqref{eq:overapproximation}.
\begin{align}
    \Psi_\delta (\U) &= \delta\U\oplus E_\Psi(\U, \delta), \label{eq:spaceexdiscreteInput} \\
    \Omega_{[0,\delta]}(\X_0, \U) &= \CHexact(\X_0, e^{A\delta} \X_0 \oplus E^+ (\X_0, \delta) \oplus \Psi_\delta (\U)) \label{eq:spaceexdiscreteFull}
\end{align}
where $\CHexact(\X, \mathcal{Y})$ denotes the convex hull of the union of two sets $\X$ and $\mathcal{Y}$ and
\begin{align*}
    E_\Psi(\U, \delta) &= \boxdot(\Phi_2(\lvert A \rvert, \delta) \; {\boxdot}(A\U)), \\
    E^+(\X_0, \delta) &= \boxdot(\Phi_2(\lvert A \rvert, \delta) \; {\boxdot}(A^2\X_0)),
\end{align*}
where~$\Phi_2(A, \delta) = \sum_{i=0}^{\infty} \frac{\delta^{i+2}}{(i+2)!} A^i$~\cite{SpaceEx:errorsupportFunc}, $\lvert A \rvert$ denotes the component-wise absolute value of~$A$, and~$\boxdot(\X)$ yields the symmetric interval hull of~$\X$, which is the smallest axis-aligned box centered in the origin and containing~$\X$.

\subsection{Zonotopes}\label{sec:zonotopes}

A \emph{zonotope} is the image of a unit hypercube under an affine map.
Equivalently, it is described by a center $c\in \R^n$ and a finite number of generators $g_1,\dots,g_p \in \R^n$:
\begin{equation*}
    \mathcal{Z} = \left\{c+\sum_{i=1}^{p}\alpha_i g_i \mid \alpha_i\in [-1,1]\right\}
\end{equation*}

We use the shorthand notations $\mathcal{Z} = (c, \langle g_1, \dots,g_p\rangle)$ and $\mathcal{Z} = (c, G)$, where $G\in \R^{n\times p}$ is a matrix whose columns are the generators.
The order of a zonotope is $o = \frac{p}{n}$, describing the relative number of generators with respect to the dimensionality of the state space.

Let $\mathcal{Z} = (c, \langle g_1,\dots,g_p\rangle) = (c, G)$ and $\mathcal{Z}' = (c', \langle g_1',\dots,g_\ell'\rangle)$ be two zonotopes.
Their Minkowski sum is the zonotope $\mathcal{Z} \oplus\mathcal{Z}' = (c+c', \langle g_1,\dots,g_p, g_1',\dots,g_\ell'\rangle)$.
Applying a linear map~$M$ to~$\mathcal{Z}$ yields the zonotope $M\mathcal{Z} = \left(Mc, MG \right)$.
The convex hull of (the union of) two zonotopes is generally not a zonotope; we utilize the operator~$\CH(\mathcal{Z}, \mathcal{Z}')$ presented in prior work~\cite{BasicZonotope} that yields a zonotope overapproximating the exact convex hull~$\CHexact(\mathcal{Z}, \mathcal{Z}')$, e.g., in~\eqref{eq:spaceexdiscreteFull}.

The Minkowski sum increases the number of generators, which makes it computationally critical when applied repeatedly.
We use two countermeasures.
First, we apply a classical order-reduction algorithm~\cite{BasicZonotope}, which overapproximates a zonotope by another zonotope with fewer generators~\cite{ReductionMethods}.
Second, we sometimes use the support function (described below) to represent the effect of the Minkowski sum instead of explicitly computing the zonotope.

\subsection{Support Function}

The \emph{support function}~$\rho$ yields the support of a compact convex set~$\X \subseteq \R^n$ in a direction~$\ell \in \R^n$, which is the scalar $\rho_{\X}(\ell) = \max\{\ell^\top x \mid x \in \X \}$. Specifically, the support of a zonotope $\mathcal{Z}=(c,\langle g_1, \ldots, g_p\rangle)$ is $\rho_{\mathcal{Z}}(\ell) = \ell^\top c + \sum_{i =1}^p \vert \ell^\top g_i \vert$~\cite{Combiningzonotopesandsupportfunctions}.
Given compatible compact convex sets $\X, \mathcal{Y}$ and a matrix~$M$, it is easy to evaluate the support function for Minkowski sums and linear maps~\cite[Table~II]{Combiningzonotopesandsupportfunctions}:
\begin{align*}
        \rho_{\X\oplus\mathcal{Y}}(\ell) &=\rho_{\X}(\ell) + \rho_{\mathcal{Y}}(\ell),\\
        \rho_{M\X}(\ell) &= \rho_{\X}(M^\top \ell).
\end{align*}

It has previously been shown that reachability analysis can be performed with the use of the support function~\cite{SupportFunctions}, also in combination with zonotopes~\cite{Combiningzonotopesandsupportfunctions}.
Using the support function has two main benefits: first, evaluating the support function for a small number of directions~$\ell$ is typically cheaper than applying the full set computations. Second, we do not increase the zonotope order when evaluating the support of the Minkowski sum of two zonotopes; instead, we only add the (scalar) support of the sets in the given direction.
We will use zonotopes for the discretization, as they avoid the recursive definition of the support function.

Note that only knowing the support of the sets~$\Omega_k$ in selected directions is not a restriction for us.
We only need to check whether they are contained in the safe region~$\safe$.
Typically, $\safe$ is given as a polyhedron in half-space representation, so it suffices to evaluate~$\Omega_k$ in one direction per half-space.
If $\safe$ is not given in half-space representation, one can approximate~$\Omega_k$ with template polyhedra~\cite{SpaceEx:errorsupportFunc}.

\section{Property-Driven Time-Step Adaptation}

In this section, we describe our adaptive reachability approach with the following roadmap.
We first select a finite set of time steps (Section~\ref{sec:select_delta}).
Next, given an LTI system~\eqref{eq:LTI}, we discretize the continuous dynamics using~\eqref{eq:spaceexdiscreteInput}--\eqref{eq:spaceexdiscreteFull} for each selected time step (Section~\ref{sec:discretize_multiple}).
Then, we present our adaptive algorithm to compute the reach tube using the recurrence in~\eqref{eq:ApproxModel}, where we try to choose the time step as large as possible (Section~\ref{sec:adaptive_reach_tube}).
During this computation, we check whether the set~$\Omega_k$ violates the safety property.
If so, we backtrack and decrease the time step.
This way, we dynamically adjust the time step to verify safety.
Finally, we propose an adaptation policy for when to increase the time step again (Section~\ref{sec:policy}) and discuss how to mitigate high zonotope orders (Section~\ref{sec:order_mitigation}).

\subsection{Selection of Candidate Time Steps}\label{sec:select_delta}

We initially make a selection of potential time steps $\Delta = [\delta_0, \delta_1, \dots, \delta_m]$ (sorted in ascending order).
For each time step, we will need to compute the approximation in~\eqref{eq:spaceexdiscreteFull}.
To avoid excessive computations of the matrix exponential $\phi_{\delta_i} = e^{A\delta_i}$, given a choice for $\delta_0 > 0$, we choose the other step sizes by doubling:
\begin{equation}
    \delta_{i+1} = 2 \delta_i,
    \label{eq:TimeSeries}
\end{equation}
where $(0 \le i < m)$.
Thus, $\Delta$ is defined by the user parameters~$\delta_0$ and~$m$ (technically, $m$ is upper-bounded via the time horizon~$T$).
Let $\bar{\phi}_i = e^{A\delta_i}$ be the matrix exponential for a given $\delta_i$.
Following~\eqref{eq:TimeSeries}, $\bar{\phi}_{i+1} = e^{A\delta_i \cdot 2} = (\bar{\phi}_i)^2$.
Thus, we can compute~$\bar{\phi}_i$ from~$\bar{\phi}_0$ using matrix multiplication instead of computing the usual approximation of the power series, which is particularly expensive in high dimensions.
For convenience, we introduce the aliases~$\delta^- = \delta_0$ and $\delta^+ = \delta_m$.

\begin{remark}
    The recurrence~\eqref{eq:TimeSeries} makes the time steps grow fast (doubling each time).
    Using scaling and squaring, one could easily obtain a more fine-grained sequence that still exploits the same idea to avoid explicitly computing the matrix exponential.
    Still, in our empirical evaluation, we observed that the simpler version in~\eqref{eq:TimeSeries} always sufficed.
    Note that having more candidate step sizes is not always beneficial, as it also increases the choices during backtracking later.
\end{remark}

\subsection{Discretization for Multiple Time Steps}\label{sec:discretize_multiple}

Given an LTI system~\eqref{eq:LTI} of the form~$(A, \X_0, \U)$ and a selection of time steps $\Delta=[\delta_0, \dots,\delta_m]$, we compute the discretized systems $(\phi_\delta, \Omega_{[0,\delta]}(\X_0, \U), \Psi_\delta(\U))$ for all $\delta\in \Delta$ using~\eqref{eq:spaceexdiscreteInput}--\eqref{eq:spaceexdiscreteFull} (recall that~$\phi_\delta = e^{A\delta}$).

\begin{remark}
    While we chose to present our discretization algorithm as a variant of the \emph{forward} method (cf.\ Section~\ref{sec:forward_discretization}), the approach is generic and also applies to other discretization methods~\cite{ForetsS22}.
\end{remark}

\begin{algorithm}[t]
\caption{Naive Discretization}
\label{alg:DiscretizationNaive}
\begin{algorithmic}[1]
\Require $A$, $\X_0$, $\U$, $\delta^-$, $\delta^+$
\State $\delta \leftarrow \delta^-$
\label{Alg:DiscNaivePsiCalc}
\While{$\delta \leq \delta^+$} 
\State $\phi[\delta] \leftarrow e^{A\delta}$  \Comment{Compute matrix exponential for each $\delta \in \Delta$}
\State $\Psi \gets \text{Compute using \eqref{eq:spaceexdiscreteInput}}$ 
\State $\Omega\leftarrow \text{Compute using \eqref{eq:spaceexdiscreteFull}}$
\State $X[\delta] \leftarrow \Omega$
\State $V[\delta] \leftarrow \Psi$
\State $\delta \leftarrow 2\delta$ \label{Alg:DiscNaiveStep}
\EndWhile
\State\Return $(\phi, X, V)$
\end{algorithmic}
\end{algorithm}

Algorithm~\ref{alg:DiscretizationNaive} shows a naive baseline to compute the discretized systems.
The discrete systems are represented as three maps~$\phi$, $X$, and~$V$, which for each value~$\delta \in \Delta$ store the corresponding matrix exponential~$\phi_{\delta}$ respectively the sets~$\Omega_{[0, \delta]}$ and~$\Psi_\delta$.
Observe that the algorithm is agnostic to the fact that the time step always doubles in line~\ref{Alg:DiscNaiveStep}.
In particular, the matrix exponential (and, hidden in the references to~\eqref{eq:spaceexdiscreteInput}--\eqref{eq:spaceexdiscreteFull}, the matrix~$\Phi_2$) is computed for each value~$\delta$ from scratch.
Moreover, since the approximation error of the discrete systems grows as the time steps grow larger, the results for larger~$\delta$ are useless in practice.

\begin{algorithm}[h!]
\caption{Optimized Discretization}
\label{alg:Discretization}
\begin{algorithmic}[1]
\Require $A$, $\X_0$, $\U$, $\delta^-$, $\delta^+$
\State $\delta \leftarrow \delta^-$
\State $\phi[\delta] \leftarrow e^{A\delta}$ \label{alg:phicomp}
\State $\Psi \leftarrow \text{Compute using \eqref{eq:spaceexdiscreteInput}}$ \label{alg1:PsiCompOnce}
\State $\Omega\leftarrow \text{Compute using \eqref{eq:spaceexdiscreteFull}}$ \label{alg1:OmegaCompOnce}
\While{$\delta \leq \delta^+$} \label{alg1:while}
\State $X[\delta] \leftarrow \Omega$ \label{alg:Xsave}
\State $V[\delta] \leftarrow \Psi $\label{alg:Vsave}
\State $\Omega \leftarrow \CH(\Omega, \phi[\delta] \Omega \oplus \Psi)$ \Comment{Improved two-step recurrence~\eqref{eq:ApproxModel}} \label{alg1:XCalc}
\State $\Psi \leftarrow \Psi \oplus \phi[\delta]\Psi$ \label{alg1:PsiComp}
\State $\delta \leftarrow 2\delta $
\State $\phi[\delta] \leftarrow (\phi[\frac{\delta}{2}])^2$ \label{alg:phicalc}
\EndWhile
\State\Return $(\phi, X, V)$
\end{algorithmic}
\end{algorithm}

Algorithm~\ref{alg:Discretization} shows our optimized discretization method, which exploits the knowledge that the time steps always double to improve both efficiency and precision.
The algorithm utilizes two crucial insights.
First, as mentioned, despite using multiple time steps~$\delta_i$, we only need to compute the matrix exponential~$\phi_{\delta_0}$ \emph{once} from scratch (line~\ref{alg:phicomp}) and then can utilize matrix multiplication (line~\ref{alg:phicalc}).
The second insight is that it suffices to use the approximation model~\eqref{eq:spaceexdiscreteInput}--\eqref{eq:spaceexdiscreteFull} for the smallest time step~$\delta^-$ (lines~\ref{alg1:PsiCompOnce}--\ref{alg1:OmegaCompOnce}), for which the approximation error is typically small.
For larger time steps, we instead propagate the sets for the previous time step one step forward using the recurrence~\eqref{eq:ApproxModel} (lines~\ref{alg1:XCalc}--\ref{alg1:PsiComp}).

\begin{figure}[t]
    \begin{subfigure}{0.49\textwidth}
        \includegraphics[width=\linewidth]{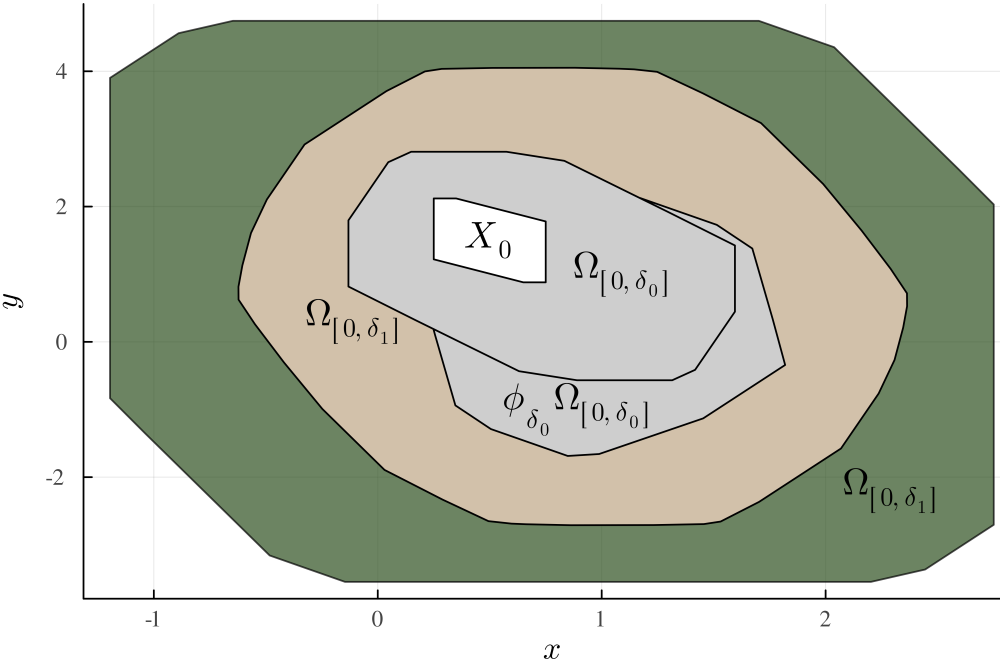}
        \caption{Without zonotope approximation.}
        \label{subfig:discLazy}
    \end{subfigure}
    \hfill
    \begin{subfigure}{0.49\textwidth}
        \includegraphics[width=\linewidth]{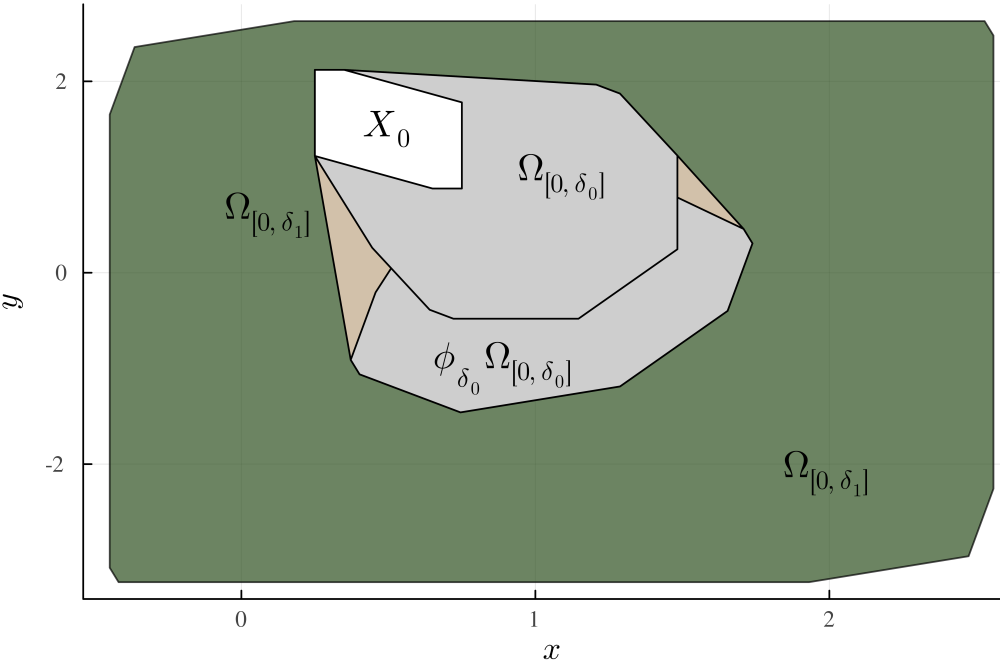}
        \caption{With zonotope approximation.}
        \label{subfig:disc}
    \end{subfigure}
    \caption{Comparison of the discretization algorithms~\ref{alg:DiscretizationNaive} and~\ref{alg:Discretization} for a system with similar dynamics as in \figref{fig:motivation} with initial states~$\X_0$ (white) and~$\U = \{0\}$.
    On the left, we use the zonotope approximation ($\CH$).
    On the right, we use the exact convex-hull operator ($\CHexact$).
    Algorithm~\ref{alg:DiscretizationNaive} computes $\Omega_{[0, \delta_1]}$ from scratch (green).
    Algorithm~\ref{alg:Discretization} computes $\Omega_{[0, \delta_1]}$ as the convex hull of $\Omega_{[0, \delta_0]} \cup \phi_{\delta_0} \Omega_{[0, \delta_0]}$ (brown).}
    \label{fig:discretization}
\end{figure}

We illustrate the difference between Algorithm~\ref{alg:DiscretizationNaive} and Algorithm~\ref{alg:Discretization} in \figref{fig:discretization} for the case~$\U = \{0\}$.
In theory, the outputs are incomparable because the better approximation of the iterative Algorithm~\ref{alg:Discretization} may be partially lost due to the additional approximation when taking the zonotope approximation of the convex hull in line~\ref{alg1:XCalc} (cf.\ Section~\ref{sec:zonotopes}).
In practice, Algorithm~\ref{alg:Discretization} yields a much smaller approximation error, in particular for the larger time steps.
This is because the common approximation models (like the one described in Section~\ref{sec:forward_discretization} that we use here) are typically only precise for small time steps~\cite{ForetsS22}.

\subsection{Reach Tube Approximation with Adaptive Time Step}\label{sec:adaptive_reach_tube}

After obtaining the discrete systems, we use Algorithm~\ref{alg:ReachTube} to compute the approximation of the reach tube.
Additional inputs are the safe region~$\safe$, which we assume to be polyhedral, along with the list of directions~$L$ that describe its facets, the time horizon~$T$ and the available time steps~$\Delta$, and a time-step policy~$\policy$ which we explain later.

\begin{algorithm}[t!]
\caption{Computation of the Reach Tube}\label{alg:ReachTube}
\begin{algorithmic}[1]
\Require ($\phi$, $X$, $V$): discretized systems,
\Require $\safe, L$: safe region and collection of corresponding direction vectors,
\Require $T, \Delta$: time horizon and available time steps,
\Require $\policy$: policy for increasing time step
\For{$\ell \in L$}
\State $\Psi_0[\ell] \gets 0$
\EndFor
\State $t \gets 0$, $k\gets 0$, $\delta \gets \delta^+$, $\Phi \gets I$
\While{$t < T$}\label{alg:timehorizon} \Comment{Invariant: $\Phi = e^{A t}$}
\While{$\delta \geq \delta^-$} \label{alg:mainwhile}
\For{$\ell \in L$} \Comment{Evaluate $\rho_{\Psi_{t+\delta} (\U)}(\ell)$ and $\rho_{\Omega_{[t, t+\delta]} (\X_0 ,\U)}(\ell)$}
\State $\Psi_{k+1}[\ell] \gets \Psi_{k}[\ell] + \rho_{V[\delta]}(\Phi^\top \ell)$ \label{alg:psiminkowski}
\State $\Omega_{k}[\ell] \gets \rho_{X[\delta]}(\Phi^\top \ell) + \Psi_{k}[\ell]$ \label{alg:omega}
\EndFor
\If{$ \Omega_{k}[\ell] \le \rho_{\safe}(\ell) \text{ for all } \ell \in L$} \Comment{Check $\Omega_k\subseteq\safe$}\label{alg:constraintcheck}
\State $t \gets t+\delta$
\State $k \gets k+1$
\State $\Phi \gets \Phi \phi[\delta]$ \Comment{$e^{A (t+\delta)}=e^{At}e^{A \delta}$}\label{alg:Phi}
\State $\textbf{break}$
\Else \Comment{Decrease time step and try again}
\State $\delta \gets \frac{1}{2}\delta$ \label{alg:lowerstep}
\EndIf
\EndWhile
\If{$\delta < \delta^-$} \Comment{$\Omega_{k} \not\subseteq \safe$ at maximal precision}
\State\Return ``Could not verify safety'' \label{alg:terminateUnknown}
\Else \Comment{May try a different step size next time}
\State $\delta \gets \text{Adjust according to $\policy$}$ \label{algline:adjusttimestep}
\EndIf
\EndWhile
\State\Return ``System is safe'' \label{alg:terminateSafe}
\end{algorithmic}
\end{algorithm}

Algorithm~\ref{alg:ReachTube} is inspired by the well-known reachability algorithm based on the support function~\cite{SupportFunctions} (however, that algorithm uses a fixed time step).
The motivation to use the support function is to avoid (1)~the explicit computation of the reach tube~$\Omega_0, \dots, \Omega_{N-1}$ and (2)~the emergence of high-order zonotopes, which would have to be tamed by order reduction (and thus additional approximation errors).
Instead, we only evaluate the support function for the reach tube in selected directions, which is sufficient to check inclusion in the safe region~$\safe$.

The loop in line~\ref{alg:mainwhile} computes the next set~$\Omega_k$ of the reach tube by starting with the current time step~$\delta$.
For each attempted~$\delta$, we compute bounds for the candidate set~$\Omega_k$ (line~\ref{alg:omega}) and perform the safety check (line~\ref{alg:constraintcheck}).
If~$\Omega_k$ is safe, we accept the time step~$\delta$ and update~$\Phi$, which stores the cumulative matrix exponential, for the next iteration (line~\ref{alg:Phi}).
Otherwise, if $\Omega_k$ is not safe, we try the next-smaller time step~$\delta$ (line~\ref{alg:lowerstep}).
This process repeats until either~$t\geq T$ (successful verification) or~$\delta < \delta^-$ (failed verification) holds.

Clearly, Algorithm~\ref{alg:ReachTube} terminates because there are finitely many choices for~$\delta$.
The following theorem asserts that the computed result is correct; moreover, the algorithm can prove safety for the same systems as the induced fixed-step approach that uses the same discretization for the step size~$\delta = \delta^-$ all the time.
Note that this holds even if Algorithm~\ref{alg:ReachTube} has not used~$\delta^-$ all the time.

\begin{theorem}[Soundness and relative completeness]\label{thm:sound_complete}
    If Algorithm~\ref{alg:ReachTube} returns in line~\ref{alg:terminateSafe}, the system is safe.
    If Algorithm~\ref{alg:ReachTube} returns in line~\ref{alg:terminateUnknown}, the standard fixed-step algorithm using~$\delta^-$ at all times would also fail.
\end{theorem}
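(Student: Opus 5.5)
The proof has two parts. For soundness, we show that the maps $X$ and $V$ from Algorithm~\ref{alg:Discretization} form valid discretized systems in the sense of~\eqref{eq:overapproximation} for every $\delta\in\Delta$. We then show that Algorithm~\ref{alg:ReachTube} evaluates the recurrence~\eqref{eq:ApproxModel} exactly in the directions $L$. For relative completeness, we show that the input accumulator $\Psi_k$ reached by the adaptive run at any time $t$ coincides with the one the fixed $\delta^-$ run reaches at the same time. The final candidate set checked with $\delta^-$ is then literally one of the sets of the fixed-step reach tube.

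\emph{Soundness, discretization.} We argue by induction over $\delta_i = 2^i\delta^-$ that $X[\delta_i]\supseteq\reach{0}{\delta_i}(\X_0)$ and $V[\delta_i]\supseteq\reach{\delta_i}{\delta_i}(\{0\})$.

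The base case is the forward model~\eqref{eq:spaceexdiscreteInput}--\eqref{eq:spaceexdiscreteFull}, taken as correct from~\cite{DecomposingReachSetComputations}.

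For the step, split $[0,2\delta]=[0,\delta]\cup[\delta,2\delta]$. By~\eqref{eq:separation} with $t=\delta$,
\[
\reach{\delta}{2\delta}(\X_0)=e^{A\delta}\reach{0}{\delta}(\X_0)\oplus\reach{\delta}{\delta}(\{0\})\subseteq \phi[\delta]\Omega\oplus\Psi .
\]
Hence $\reach{0}{2\delta}(\X_0)\subseteq\CHexact(\Omega,\phi[\delta]\Omega\oplus\Psi)\subseteq\CH(\cdots)$, which is line~\ref{alg1:XCalc}.

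For the input part, we use~\eqref{eq:LTIsol} with $x(0)=0$ and split the integral at $\delta$. The first piece equals $e^{A\delta}\int_0^\delta e^{A(\delta-s)}u(s)\,ds\in e^{A\delta}\reach{\delta}{\delta}(\{0\})$. The second piece, after shifting time, lies in $\reach{\delta}{\delta}(\{0\})$ because the input constraints are time-invariant. This justifies line~\ref{alg1:PsiComp}. We also note that $\phi[\delta_i]=e^{A\delta_i}$ by squaring.

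\emph{Soundness, reach tube.} We maintain the loop invariants $\Phi=e^{At}$, $\Psi_k[\ell]=\rho_{\Psi_k}(\ell)$ and $\Omega_k[\ell]=\rho_{\Omega_k}(\ell)$ for the sets of~\eqref{eq:ApproxModel}. These follow from the rules $\rho_{\X\oplus\mathcal{Y}}=\rho_\X+\rho_{\mathcal{Y}}$ and $\rho_{M\X}(\ell)=\rho_\X(M^\top\ell)$, and no approximation is introduced. Since $\safe$ is the polyhedron with facet normals $L$, the check in line~\ref{alg:constraintcheck} is equivalent to $\Omega_k\subseteq\safe$.

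Every accepted step extends the covered interval by $\delta_k$. On return in line~\ref{alg:terminateSafe} we have $\sum_k\delta_k\ge T$. The recurrence~\eqref{eq:ApproxModel} explicitly allows varying $\delta_k$, so~\cite[Proposition~1]{SpaceEx:errorsupportFunc} gives $\reach{0}{T}(\X_0)\subseteq\bigcup_k\Omega_k\subseteq\safe$.

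\emph{Relative completeness.} All step sizes are multiples of $\delta^-$, so every reached time $t$ equals $j\delta^-$ for some $j$.

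We claim that $V[2^i\delta^-]=\bigoplus_{r<2^i}\phi_{\delta^-}^{\,r}V[\delta^-]$ as sets. This follows by induction from line~\ref{alg1:PsiComp}, since the Minkowski sum and linear maps of zonotopes are exact. Consequently, each adaptive increment $e^{At_k}V[\delta_k]$ equals the sum of the corresponding $\delta_k/\delta^-$ consecutive increments of the fixed-step recurrence. Telescoping gives that the adaptive $\Psi_k$ at time $t=j\delta^-$ equals the fixed-step $\Psi_j$, and likewise $\Phi=\phi_{\delta^-}^{\,j}$.

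Now suppose the algorithm returns in line~\ref{alg:terminateUnknown} at time $t=j\delta^-<T$. The last attempt used $\delta^-$ and computed exactly the support values of the fixed-step set $\Omega_j=e^{At}X[\delta^-]\oplus\Psi_j$, and that check failed. So the fixed-step algorithm also fails its inclusion check at index $j<T/\delta^-$.

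The main obstacle is this exact-equality claim for $\Psi$. It requires that no order reduction is applied to $\Psi$ in Algorithm~\ref{alg:Discretization}, or at least that any reduction is the same one the fixed-step baseline uses. I would state this assumption explicitly. Without it, the argument only yields a superset, and completeness could break.
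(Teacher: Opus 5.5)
Your proof is correct and follows the same route as the paper's sketch. Soundness holds because every accepted $\Omega_k$ is checked against $\safe$, and relative completeness holds because the final failing attempt uses $\delta^-$ at a multiple of $\delta^-$ and therefore reproduces the corresponding fixed-step set. You are more careful than the paper in justifying the doubled discretizations of Algorithm~\ref{alg:Discretization} and the equality of the accumulated input sets via $V[2^i\delta^-]=\bigoplus_{r<2^i}\phi_{\delta^-}^{\,r}V[\delta^-]$. Your caveat that this equality requires no order reduction on $\Psi$ in Algorithm~\ref{alg:Discretization} is a genuine hypothesis that the paper leaves implicit; Section~\ref{sec:order_mitigation} even says order reductions may be needed in that algorithm.
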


\begin{proof}[Sketch]
    The first property holds trivially since we check safety for each step.
    The second property holds because (1)~we always backtrack when a safety check fails and (2)~right before Algorithm~\ref{alg:ReachTube} fails, we use the smallest step~$\delta^-$ and~$t$ is a multiple of~$\delta^-$.
    The latter holds because our time steps are multiples of~$\delta^-$.
    Hence, the last set we compute equals precisely the set the fixed-step algorithm would compute at this point, and thus that algorithm would also fail.
\end{proof}
\subsection{Adaptation Policy}\label{sec:policy}

While Algorithm~\ref{alg:ReachTube} yields sound results under any policy~$\policy$ to adjust the next time step~$\delta_i$ to a value in~$\Delta$, the policy impacts the efficiency.
In our implementation, we use a policy that dynamically adjusts~$\delta_i$ based on previous adjustments.
First, our policy never decreases~$\delta_i$.
Moreover, if the time step did not have to be reduced in the previous four steps, we double the time step (provided this would not exceed~$\delta^+$).
We found that this policy works well in practice.

\subsection{Mitigating the Growth of the Zonotope Order}\label{sec:order_mitigation}

As mentioned, when using zonotopes, the increase in generators caused by the repeated Minkowski additions can become a bottleneck.
We use the support function to represent~$\Psi_k$ in Algorithm~\ref{alg:ReachTube} (line~\ref{alg:psiminkowski}).
Furthermore, by accumulating~$\Phi$ in line~\ref{alg:Phi}, we avoid additional order increases in line~\ref{alg:omega}, since $\Omega_k$ is not recursively defined.
However, order reductions can still be necessary in Algorithm~\ref{alg:Discretization} for larger time steps, as that algorithm uses the recurrence formulation.

\begin{remark}
    The repeated convex hull computations can also significantly increase the zonotope order when~$\delta^+ \gg \delta^-$.
    As evaluating the support function of a convex hull can be done exactly, this error could also be mitigated.
    However, empirically we found that this negatively impacts the efficiency of the algorithm.
\end{remark}

\section{Evaluation}
In this section, we evaluate our implementation of the proposed approach, which we refer to as \emph{\TheTitle} in the following (but which also includes Algorithm~\ref{alg:Discretization}), in the Julia programming language.
For all results, we use a machine with an Intel i5 11400F CPU and 16 GB RAM running Windows and using Julia v1.12.

\subsection{Benchmark Problems and Baseline Algorithms}

\begin{table}[t]
    \caption{SLICOT benchmark statistics.
    The specifications are taken from~\cite{TranNJ16,BakD17}.
    Some specifications use a linear combination of the state variables, denoted by~$y$.}
    \label{tab:BenchmarkModelOverview}
    \centering
    \begin{tabular}{c c c}
        \toprule
        Model & Dimension ($n$) & Safety property ($\safe$) \\
        \midrule
        Motor & \phantom{00}8 &  $x_1\in \left[0.35, 0.4\right] \vee x_5\notin \left[0.45, 0.6\right]$ \\
        Building & \phantom{0}48 & $x_{25} < 6\cdot10^{-3}$ \\
        PDE & \phantom{0}84 & $y < 12$\\
        Heat & 200 & $x_{133} < 0.1$\\
        ISS & 270 & $y \in \left[-7, 7\right] \cdot 10^{-4}$\\
        Beam & 384 & $x_{89} < 2100$\\
        MNA1 & 578 & $x_1 < 0.5$\\
        \bottomrule
    \end{tabular}
\end{table}

We consider a subset of the SLICOT benchmark models~\cite{bencmarkmodels}, which represent high-dimensional real-world applications.
All models are LTI systems of the form~\eqref{eq:LTI} and are summarized in Table~\ref{tab:BenchmarkModelOverview}.
The time horizon is always~$T = 20$.

We compare to two state-of-the-art reachability algorithms that use a fixed time step.
The first algorithm is the one based on the support function~\cite{SupportFunctions,LGG09}, which we denote as \emph{LGG}.
The second algorithm is based on decomposition~\cite{DecomposingReachSetComputations}, which we denote as \emph{BFFPSV}.
For both LGG and BFFPSV, we apply the same algorithm options as in a prior evaluation~\cite{DecomposingReachSetComputations} and use the \emph{forward} discretization.
We employ the implementations of these algorithms in the tool JuliaReach~\cite{JuliaReach}, which is also written in Julia, for a fair comparison; for instance, we use the same method for calculating the matrix exponential for all three approaches.

\subsection{Reach Tube Computation and Verification of Safety Properties}

We report about the results of our evaluation in Table~\ref{tab:mainResults}.
Since LGG and BFFPSV first compute the full reach tube and then verify safety (whereas in our approach, these steps are interleaved), we report the total time for these tools.

We used the same fixed time steps~$\delta$ for LGG and BFFPSV as in the prior evaluation~\cite[Table~4]{DecomposingReachSetComputations}.
We selected the step ranges ($\delta^-$ and $\delta^+$) such that~$\delta^- = \delta$; by Theorem~\ref{thm:sound_complete}, we thus can verify the same safety properties as the baselines.

We highlight that we often choose large values for~$\delta^+$ that would normally seem unrealistic (e.g., $\delta^+ = 1.024$ for the Heat model, which is close to~$T = 20$).
This highlights the precision we can achieve thanks to Algorithm~\ref{alg:Discretization}.

\begin{table*}[t]
    \caption{Evaluation on the SLICOT benchmarks. Bold entries mark the best result and $\dagger$ marks a failure to verify the property. \TheTitle uses $\delta^- = \delta$, where $\delta$ is the fixed time step, and $\delta^+$ is given as a power of two larger than $\delta$.}
    \label{tab:mainResults}
    \centering
    \resizebox{\textwidth}{!}{\begin{tabular}{c c c r c c c c c r}
        \toprule
        & & \multicolumn{3}{c}{\TheTitle (ours)} & & LGG & & BFFPSV & \\
        \cline{3-5} \cline{7-10} 
        Model & $\delta^- = \delta$ & $\delta^+/\delta^-$ & Steps & Time (s) & & Time (s) & &  Time (s) &  Steps\\
        \midrule
        Motor & $1\cdot10^{-3}$ & $2^3$  & 2503 & {\boldmath$1.82\cdot 10^{-2}$} &  & $6.45\cdot 10^{-2}$ &  & $1.94\cdot 10^{-1}$ & $20000$\\ 

        Building & $2\cdot10^{-3}$ & $2^{9}$ & 239 & {\boldmath $9.35\cdot 10^{-3}$} & & $2.49\cdot 10^{-1}$ & & $1.46\cdot 10^{-1}$ & 10000\\ 

        PDE & $3\cdot10^{-4}$ & $2^{10}$  & 81 & {\boldmath $4.01\cdot 10^{-2}$}  & & $5.67$ & & $322$ & 66667\\ 

        Heat & $1\cdot10^{-3}$ & $2^{10}$ & 28 & {\boldmath $1.83\cdot 10^{-1}$}  & & $9.00$ & & $4.71$ & 20000\\ 

        ISS & $6\cdot10^{-4}$ & $ 2^5$ & 1042 & {\boldmath $3.25\cdot 10^{-1}$}  & & $2.87$ & & $\dagger~(88.5)$ & 33334\\

        Beam & $5\cdot10^{-5}$ & $2^5$ & 12501  & {\boldmath $4.44$} & & $778$ & & $343$ & 400000\\ 

        MNA1 & $4\cdot10^{-4}$ & $2^{11}$ & 971 &  {\boldmath$2.57$}  & & $688$ &  & $123$ & 50000\\ 
        \bottomrule
    \end{tabular}}
\end{table*}

Our approach consistently outperforms the baselines across all benchmarks, often achieving dramatic speedups of one or two orders of magnitude.
The main reason is that we can effectively employ large time steps most of the time, whereas the fixed-step approaches must use tiny steps all the time.
For instance, we are able to verify the 578-dimensional MNA1 model in less than~$3$ seconds, which is a $\times 47$ speedup over BFFPSV and a $\times 267$ speedup over LGG.

For the ISS model, BFFPSV is not precise enough to prove the property.
This is consistent with previous results~\cite{DecomposingReachSetComputations}.
LGG is able to verify the property, and does so efficiently due to the sparsity of the system matrix.
However, our approach still outperforms LGG with a~$\times 8$ speedup.

\subsection{Approximation Quality}

\begin{figure}[t!]
    \begin{subfigure}{0.48\textwidth}
        \includegraphics[width=\linewidth]{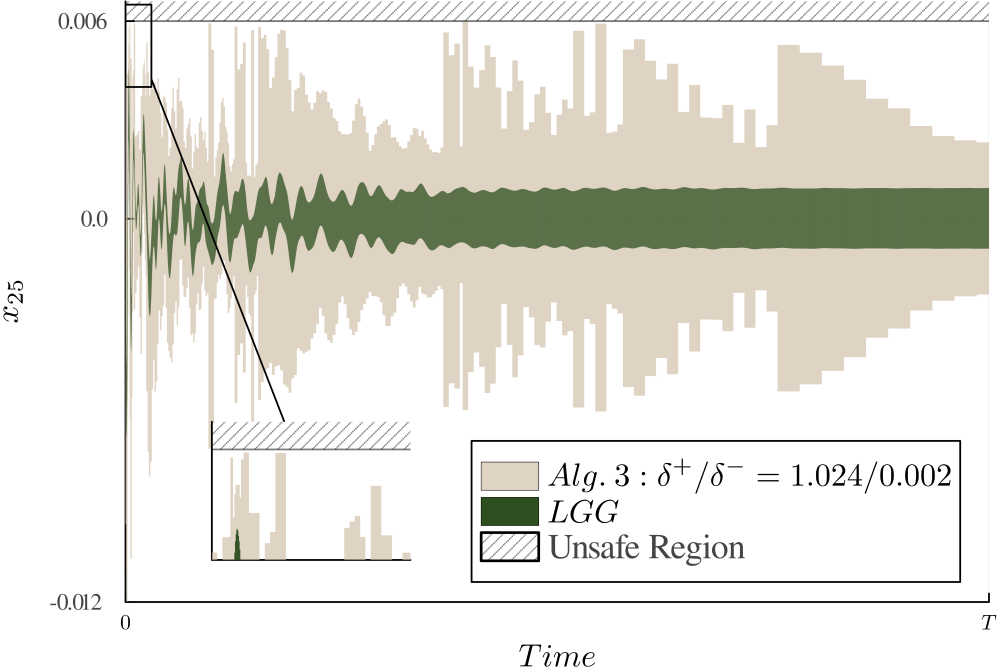}
        \caption{Building model.}
        \label{fig:Building}
    \end{subfigure}
    \hfill
    \begin{subfigure}{0.48\textwidth}
        \includegraphics[width=\linewidth]{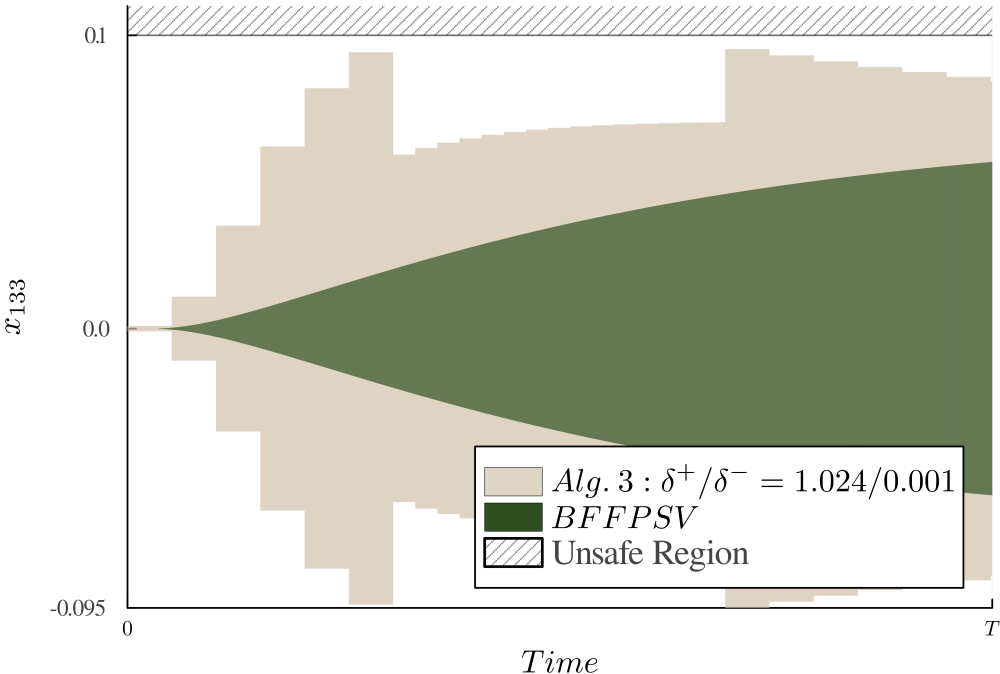}
        \caption{Heat model.}
        \label{fig:Heat}
    \end{subfigure}
    \caption{Comparison of \TheTitle to LGG and BFFPSV. The step sizes for all algorithms are identical with those reported in Table~\ref{tab:mainResults}.}
    \label{fig:ReACT-LGG-BFFPSV-comparison}
\end{figure}

In contrast to the baselines, our approach is willing to produce large approximation errors as long as the specification can be proven.
Hence, a direct comparison of the approximation quality is not meaningful.
Nevertheless, we show how the approximations differ empirically for three selected benchmark models.
(Since the approximation quality of LGG and BFFPSV is typically similar to each other, we only show one of these approaches per plot.)

For the Building model (\figref{fig:Building}), the reach tube computed by LGG is closest to the unsafe region in the first few
steps, after which the high precision would not be needed anymore.
\TheTitle gets close to the unsafe region (and hence switches the time step) many times.
The reach tube computed by \TheTitle only consists of 239 steps, whereas LGG computes for 10{,}000 steps.

The Heat model (\figref{fig:Heat}) highlights multiple benefits of \TheTitle. We can take large time steps because of the high precision of our discretization. \TheTitle verifies the Heat model in 28 steps, while BFFPSV computes for 20{,}000 steps. Additionally, the adaptive approach allows \TheTitle to continuously use the best time step from the selection~$\Delta$.

\begin{figure}[ht]
    \centering
    \includegraphics[width=\textwidth,height=70mm,keepaspectratio]{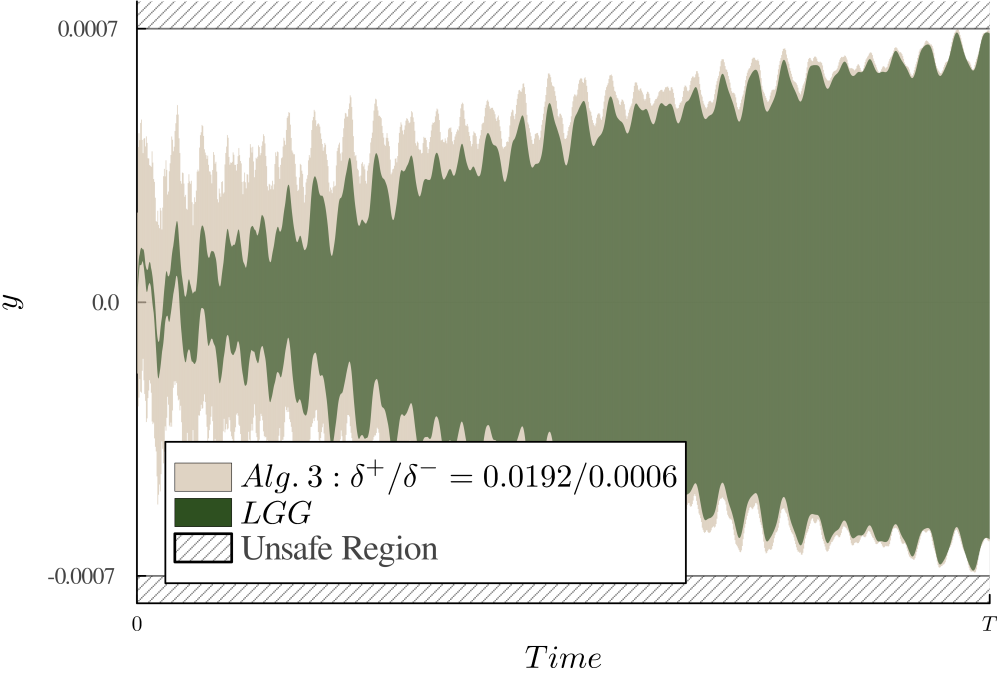}
    \caption{Comparison of \TheTitle to LGG on the ISS model.
    The step sizes for the algorithms are identical with those reported in Table~\ref{tab:mainResults}.}
    \label{fig:ISSbenchmark}
\end{figure}

For the ISS model (\figref{fig:ISSbenchmark}), the green sets computed by LGG get close to the unsafe region.
This explains why BFFPSV, which is generally less precise than LGG, cannot prove this specification.
(Note that using smaller time steps would not necessarily help, as that requires more time steps and hence more error accumulation.)
\TheTitle can generally take larger time steps in the beginning.
The reach tube computed by \TheTitle only consists of 1{,}042 steps, whereas LGG computes for 33{,}334 steps.

\subsection{Property-Driven Adaptation}

\figref{fig:Building} demonstrates how our approach benefits from property-driven time-step adaptation on the Building model.
We see that the time step evolves over time for a model where precision is crucial early in the time interval~$[0,T]$.
As time advances, the distance between the green reach tube and the unsafe region becomes larger.
This allows our approach to gradually increase~$\delta$, which drastically decreases computation time.
Notably, the largest time step we use is~$\times 512$ larger than the smallest one that is necessary to prove safety.

\subsection{Comparison of Algorithm~\ref{alg:DiscretizationNaive} and Algorithm~\ref{alg:Discretization}}

\begin{figure}[t]
    \centering
    \includegraphics[width=\textwidth,height=70mm,keepaspectratio]{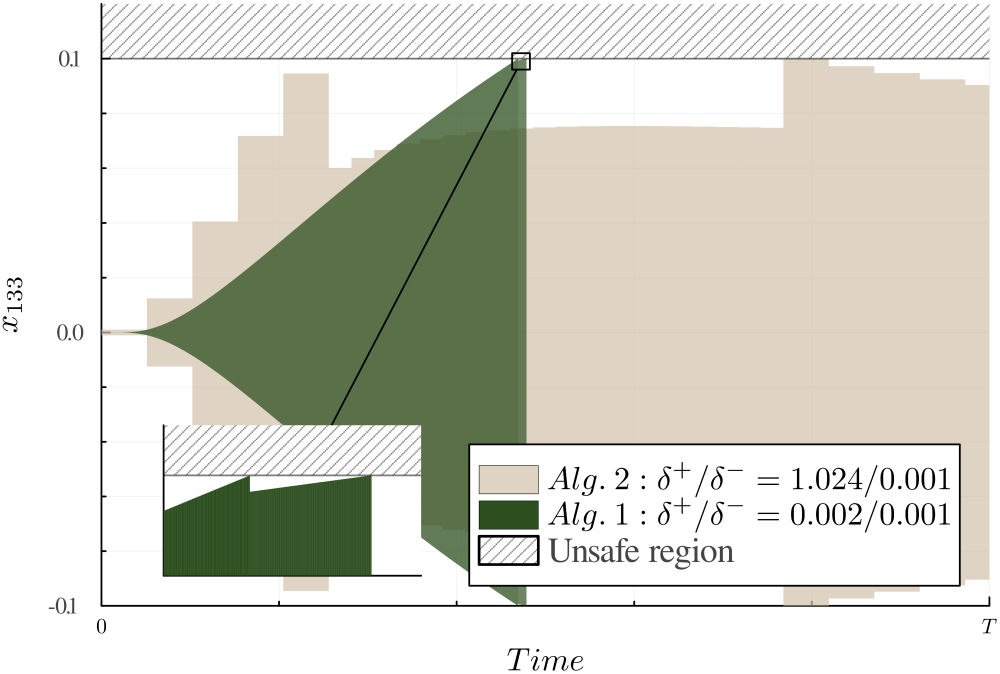}
    \caption{Comparison of the discretization algorithms~\ref{alg:DiscretizationNaive} and~\ref{alg:Discretization} on the Heat model. Algorithm~\ref{alg:Discretization} uses $\delta^-=0.001$ and $\delta^+=1.024$, while Algorithm~\ref{alg:DiscretizationNaive} uses $\delta^-=0.001$ and $\delta^+=0.002$.}
    \label{fig:discretizationComparison}
\end{figure}

As mentioned, our optimized discretization algorithm empirically achieves lower approximation errors compared to the naive algorithm, even at large time steps.
This allows us to sometimes even use the largest time step~$\delta^+$ for several steps during the reach tube computation in \TheTitle.
In \figref{fig:discretizationComparison}, we demonstrate the difference of using the optimized discretization (Algorithm~\ref{alg:Discretization}) over the naive discretization (Algorithm~\ref{alg:DiscretizationNaive}).
Both algorithms use the smallest time step~$\delta^- = 0.001$.
With Algorithm~\ref{alg:Discretization} and~$\delta^+ = 1.024$, our method successfully verifies the model, for which it only uses the largest time steps~$0.512$ and~$1.024$. Even when limiting Algorithm~\ref{alg:DiscretizationNaive} to~$\delta^+ = 0.002$, our method fails within 10 time units because the precision is not high enough.
This is due to the accumulation of the approximation errors in~$\Psi$ when using Algorithm~\ref{alg:DiscretizationNaive}.

\section{Conclusion}

We have proposed an efficient reachability algorithm for continuous-time linear time-invariant systems with uncertain inputs.
Our algorithm is based on time discretization and uses an adaptive step size.
In contrast to other adaptive algorithms, ours is property-driven instead of error-driven: the precision is increased only when necessary to prove the specification.
To achieve efficient performance, we also developed a novel discretization method that allows us to make large time steps in practice.
We have implemented our algorithms efficiently and demonstrated performance improvements over state-of-the-art methods of one or two orders of magnitude on high-dimensional benchmark systems.

In the future, it will be interesting to investigate how our approach can be used for reach-avoid specifications.
We also plan to generalize it to hybrid systems.
There we can partially reuse our algorithm and treat transition guards as unsafe regions.
In contrast, however, once an intersection with the guard is detected, we are in a new setting.
Thus, a key challenge will be to avoid that our discretization algorithm becomes a bottleneck after each discrete transition, both in terms of performance and overapproximation.

\subsection*{Acknowledgments}
This research was partly supported by the Independent Research Fund Denmark under reference number 10.46540/3120-00041B and the Villum Investigator Grant S4OS under reference number 37819.

\bibliographystyle{splncs04}
\bibliography{main}

\end{document}